\newcommand{\compass}{\ensuremath{\textrm{NP} \subseteq \textrm{coNP}/\textrm{poly}}\xspace}
\newcommand{\cc}{{\rm{cc}}}
\newcommand{\defproblemu}[4]{
%  \vspace{1mm}
%  \hline
  \vspace{1mm}
\noindent\fbox{
  \begin{minipage}{\textwidth-0.5cm}
  \begin{tabular*}{\textwidth}{@{\extracolsep{\fill}}lr} #1 \\ \end{tabular*}
  {\bf{Input:}} #2  \\
  {\bf{Parameter:}} #3\\
  {\bf{Question:}} #4
  \end{minipage}
  }
%  \vspace{1mm}
%  \hline
  \vspace{1mm}
}
\newcommand{\defproblemugoal}[4]{
%  \vspace{1mm}
%  \hline
  \vspace{1mm}
\noindent\fbox{
  \begin{minipage}{\textwidth-0.5cm} \begin{tabular*}{\textwidth}{@{\extracolsep{\fill}}lr} #1  \\ \end{tabular*}
  {\bf{Input:}} #2  \\
  {\bf{Parameter:}} #3\\
  {\bf{Goal:}} #4
  \end{minipage}
  }
%  \vspace{1mm}
%  \hline
  \vspace{1mm}
}
\newcommand{\cbipst}{{$\oplus$-\sc{Bipartite Steiner Tree}}\xspace}
\newcommand{\wbipst}{{\sc{Weighted Bipartite Steiner Tree}}\xspace}
\newcommand{\bipst}{{\sc{Bipartite Steiner Tree}}\xspace}
\newcommand{\wcvcshort}{{\sc{WCVC}}\xspace}
\newcommand{\ccvcshort}{{\sc{$\oplus$-CVC}}\xspace}
\newcommand{\compwcvc}{{\sc{Compression Weighted Connected Vertex Cover}}\xspace}
\newcommand{\compwcvcshort}{{\sc{Comp-WCVC}}\xspace}
\newcommand{\cvc}{{\sc{Connected Vertex Cover}}\xspace}
\newcommand{\ccvc}{{$\oplus$-\sc{Connected Vertex Cover}}\xspace}
\newcommand{\wcvc}{{\sc{Weighted Connected Vertex Cover}}\xspace}
\newcommand{\weight}{\omega}
\begin{document}

  \date{}

\author{
  Marek Cygan
}

\institute{Institute of Informatics, University of Warsaw, Poland \\
      \email{cygan@mimuw.edu.pl}
}

  \title{Deterministic parameterized connected vertex cover 
    %\\
%  \small{[Student submission]}
  }

  \maketitle

\begin{abstract}
In the \cvc problem we are given an undirected graph $G$
together with an integer $k$ and we are to find
a subset of vertices $X$ of size at most $k$,
such that $X$ contains at least one end-point of each edge
and moreover $X$ induces a connected subgraph.
For this problem we present a deterministic algorithm running in $O(2^kn^{O(1)})$
time and polynomial space, improving over previously best $O(2.4882^kn^{O(1)})$
deterministic algorithm and $O(2^kn^{O(1)})$ randomized algorithm.
Furthermore, when usage of exponential space is allowed,
we present an $O(2^kk(n+m))$ time algorithm
that solves a more general variant with arbitrary real weights.

Finally, we show that in $O(2^kk(n+m))$ time and $O(2^kk)$ space
one can count the number of connected vertex covers of size at most $k$,
which can not be improved to $O((2-\varepsilon)^kn^{O(1)})$ for any $\varepsilon>0$
under the Strong Exponential Time Hypothesis, as shown by Cygan et al.~[CCC'12].
\end{abstract}

\section{Introduction}

In the classical vertex cover problem we are asked whether
there exists a set of at most $k$ vertices, containing
at least one end-point of each edge.
As a basic problem in the graph theory {\sc Vertex Cover}
is extensively studied, together with its natural variants.
One of the generalizations of {\sc Vertex Cover}
is the \cvc problem, where a vertex cover
is called a {\em connected vertex cover} if it induces
a connected subgraph.

\defproblemu{\cvc}{An undirected graph $G=(V,E)$ and an integer $k$.}{$k$}
{ Does there exist a connected vertex cover of $G$ of cardinality at most $k$?}

As \cvc is NP-complete we can not hope for polynomial time solutions,
however it is possible to efficiently solve the problem for small
values of $k$.
Obviously, for any fixed $k$, we can solve the problem
in polynomial time, by trying all $n^k$ possible subsets of vertices.
In the parameterized complexity setting we are interested
in finding algorithms of $f(k)n^{O(1)}$ running time,
for some computable function $f$, that is polynomial for each fixed value of $k$,
but where the degree of the polynomial is independent of $k$.

A few fixed-parameter algorithms were designed for the \cvc problem
during the last years.
The fastest deterministic algorithm is due to Binkele-Raible~\cite{raible-cvc}
running in $O^*(2.4882^k)$ time,
while the fastest (randomized) algorithm is due to Cygan et al.~\cite{focs}
running in $O^*(2^k)$ time, where by $O^*$ we denote the standard big $O$ notation,
with polynomial factors omitted.
In Table~\ref{tab1} we summarize the history of parameterized algorithms
for \cvc.

\begin{table*}[h]
\centering
\begin{tabular}{|c|c|}
\hline
$O^*(6^k)$ & Guo et al.~\cite{guo-cvc} \\
\hline
$O^*(3.2361^k)$ & M{\"o}lle et al.~\cite{molle-cvc1} \\
\hline
$O^*(2.9316^k)$ & Fernau et al.~\cite{fernau-cvc} \\
\hline
$O^*(2.7606^k)$ & M{\"o}lle et al.~\cite{molle-cvc2}\\
\hline
$O^*(2.4882^k)$ &Binkele-Raible~\cite{raible-cvc}\\
\hline
$O^*(2^k)$(randomized) &Cygan et al.~\cite{focs}\\
\hline
$O^*(2^k)$ & this paper\\
\hline
\end{tabular}
\vspace{.5em}
\label{tab1}
\caption{Summary of parameterized algorithms for \cvc.}
\end{table*}

\paragraph{Our results} The main result of this paper 
is a deterministic algorithm solving \cvc in $O^*(2^k)$ time.
Moreover, when we allow exponential space, in the same 
running time we can solve weighted and counting versions
of the \cvc problem, which was not possible
with the previously fastest randomized algorithm of~\cite{focs}.

\defproblemugoal{\ccvc (\ccvcshort)}{An undirected graph $G=(V,E)$, an integer $k$.}{$k$}
{ Find the number of connected vertex covers of cardinality at most $k$.} 

\defproblemugoal{\wcvc (\wcvcshort)}{An undirected graph $G=(V,E)$, a weight function $\weight:V \rightarrow \mathbb{R}_+$ and an integer $k$.}{$k$}
{ Find a minimum weight connected vertex cover of cardinality at most $k$.} 

\begin{theorem}
\label{thm:main-exp-w}
\wcvc can be solved in \break $O(2^kk(|V|+|E|))$ time and $O(2^kk)$ space.
\end{theorem}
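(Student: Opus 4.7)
The plan is to reduce \wcvc to \wbipst via the standard edge-vertex construction and then to solve \wbipst by iterative compression; each compression step enumerates $2^{k+1}$ guesses for the intersection of the target cover with a candidate solution and, for each guess, solves a polynomial-time subproblem in the bipartite Steiner tree framework.

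For the reduction I would construct the bipartite graph $B = (V \cup \hat E, E_B)$ where $\hat E := \{\hat e : e \in E(G)\}$ and $\hat e$ is adjacent in $B$ to the two endpoints of $e$; each $v \in V$ keeps its weight $\weight(v)$ and each $\hat e$ is given weight $0$. A set $X \subseteq V$ is a CVC of $G$ of weight $w$ iff $X \cup \hat E$ induces a connected subgraph of $B$ of weight $w$: the cover condition is equivalent to every $\hat e$ having a neighbour in $X$, and a path $u_0, \hat e_1, u_1, \hat e_2, \ldots, u_t$ in $B$ corresponds one-to-one to a walk $u_0 u_1 \cdots u_t$ in $G[X]$, so connectivity transfers in both directions. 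Hence \wcvc is equivalent to finding a minimum-weight set of at most $k$ Steiner vertices in $B$ whose union with the terminal set $\hat E$ is connected, which is an instance of \wbipst.

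To solve \wbipst I would apply iterative compression: process the vertices of $G$ in some order, maintaining at each stage an optimum CVC of size at most $k+1$ of the currently considered induced subgraph, and upon adding the next vertex invoking a compression routine that shrinks the CVC of size $\le k+2$ back to size $\le k+1$ (the last iteration producing one of size $\le k$). The routine takes a CVC $Z$ with $|Z| \le k+2$ and guesses the intended intersection $Y := X \cap Z$ with the target $X$ by iterating over all $2^{|Z|}$ subsets $Y \subseteq Z$. For a fixed $Y$ the set $D := Z \setminus Y$ is forbidden from $X$, which in turn forces $N(D)$ into $X$; writing $F := Y \cup N(D)$ the subproblem becomes: find a minimum-weight $F' \subseteq V \setminus (F \cup D)$ of size $\le (k+1) - |Y|$ such that $F \cup F'$ is a CVC of $G \setminus D$. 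Provided each such fixed-$Y$ subproblem can be solved in time $O(k(|V|+|E|))$, a compression step runs in $O(2^k k(|V|+|E|))$ total, and this yields the overall time bound.

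The crux is the fixed-$Y$ subproblem. Reapplying the bipartite Steiner tree view, in the bipartite graph derived from $G \setminus D$ the forced set $F$ together with all edge-vertices becomes a large terminal set, and we seek a minimum-weight collection of at most $(k+1) - |Y|$ additional Steiner vertices from $\mathrm{Free} := V \setminus (F \cup D)$ whose inclusion makes the whole thing connected. I would contract each connected component of the induced forced subgraph into a super-vertex (at most $O(k)$ of them, otherwise no size-$(k{+}1)$ completion is feasible) and compute the cheapest set of Free ``bridges'' between super-vertices by a shortest-path / cheapest-arborescence style procedure, while simultaneously ensuring that every remaining $\mathrm{Free}$--$\mathrm{Free}$ edge is covered. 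The \emph{main obstacle} is precisely to argue that these two intertwined requirements, connectivity and vertex cover on $\mathrm{Free}$ edges, can be enforced together in $O(k(|V|+|E|))$ time; the key is that the large forced set $F$ pre-determines almost all cover decisions, leaving a combinatorial residue controlled by $k$ rather than by $n$. The exponential space $O(2^k k)$ then arises from storing, across the iteration over $Y \subseteq Z$, the best extension weight and an $O(k)$-size reconstruction record for each guess — something the weighted variant needs but which the polynomial-space deterministic algorithm for the unweighted decision version can avoid.
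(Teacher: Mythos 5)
Your proposal has a genuine gap at exactly the point you flag as ``the main obstacle,'' and that obstacle is not a technicality --- it is the entire content of the theorem. After guessing $Y = X \cap Z$ and forcing $F = Y \cup N(Z\setminus Y)$ into the solution, the residual task is to pick a minimum-weight set of at most $k+1-|Y|$ vertices from $\mathrm{Free} \subseteq V\setminus Z$ that makes the (up to $|Y|+1$ many) connected components of $G[F]$ into one component. This is a node-weighted Steiner tree instance with up to $k+1$ terminals (after contracting the components), and no shortest-path or cheapest-arborescence procedure solves it exactly in $O(k(|V|+|E|))$ time; such greedy/metric constructions only give approximations for Steiner tree. (Incidentally, your worry about covering $\mathrm{Free}$--$\mathrm{Free}$ edges is vacuous: since $Z$ is a vertex cover, $V\setminus Z$ is independent, so only the connectivity requirement survives --- but that requirement alone is already a genuine Steiner tree problem.) The paper solves this subproblem by a $O(2^{|T|}k(|V|+|E|))$ subset dynamic program over the terminal set $T$ of contracted components, exploiting that both sides of the resulting bipartite graph are independent sets.

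This creates a second, equally essential gap: once each fixed-$Y$ subproblem costs $2^{|T_Y|}\cdot\mathrm{poly}$ rather than $\mathrm{poly}$, the naive total over all $2^{|Z|}$ guesses is $\sum_{Y} 2^{|T_Y|}$, which can be as large as $3^{|Z|}$ (each vertex of $Z$ independently being out, in-and-in-a-chosen-component, or in-and-not). The paper's key combinatorial lemma shows that, because $G[Z]$ is \emph{connected} (this is precisely why the compression step maintains a \emph{connected} vertex cover, not just a vertex cover), the sum $\sum 2^{|\cc(G[Z_1])|}$ over all $Z_1$ whose complement in $Z$ is independent is at most $3\cdot 2^{|Z|-1}$, via an injection into $\{$three states for a root$\}\times\{$two states per non-root vertex of a spanning tree$\}$. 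Your proposal contains neither the exponential Steiner subroutine nor this amortization, so even granting the iterative-compression skeleton (which matches the paper, modulo the detail that one must build $G$ up by edge contractions rather than induced subgraphs to keep the maintained cover connected), the claimed $O(2^k k(|V|+|E|))$ bound is not established. The opening reduction to \wbipst with terminal set $\hat E$ is also a dead end as stated, since $|\hat E| = m$ and the only exact algorithms available are exponential in the number of terminals.
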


\begin{theorem}
\label{thm:main-exp-c}
\ccvc can be solved in \break $O(2^kk(|V|+|E|))$ time and $O(2^kk)$ space.
\end{theorem}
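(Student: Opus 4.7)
The natural approach is to reuse the algorithm that proves Theorem~\ref{thm:main-exp-w} and to replace its min-plus semiring by the standard sum-product semiring. Any dynamic program that computes $\min$ over a family of partial solutions can be turned into one that computes the count of the same family, provided each target object is represented by a unique chain of DP transitions; the time and space bounds carry over unchanged because only the arithmetic inside each table entry changes, not the set of updates performed.

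Concretely, the parameters $O(2^k k(|V|+|E|))$ time and $O(2^k k)$ space match the complexity of a Dreyfus--Wagner-style Steiner-tree DP on a terminal set of size at most $k$: the factor $2^k$ indexes subsets of terminals, the factor $k$ comes from iterating over a ``last joining terminal'' (or from propagating a shortest-path relaxation), and $n+m$ is the cost of a BFS step. I therefore expect Theorem~\ref{thm:main-exp-w} to be proved by reducing \wcvc to a weighted Steiner-tree-like problem on an auxiliary graph with $O(k)$ terminals and then running that DP. For \ccvcshort I would reuse the same reduction, set all auxiliary weights to $1$, and change every $\min$ in the DP update to $+$.

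The main obstacle will be making the reduction \emph{count-preserving}: a generic \cvc-to-Steiner reduction might associate several distinct Steiner objects with the same connected vertex cover (for instance different spanning trees of the same cover), which is harmless for optimization but would inflate the count. I would handle this either by defining the DP state directly over the vertex set of the partial cover, so that the state captures the cover rather than a specific spanning tree of it, or by restricting to a canonical spanning tree per cover, e.g.\ the lex-smallest BFS tree from a fixed root. Once this is arranged, correctness of the counting DP follows from a straightforward induction paralleling the proof of Theorem~\ref{thm:main-exp-w}, and the complexity bounds are inherited verbatim, since the counting DP performs the same $O(2^k k(|V|+|E|))$ updates over the same $O(2^k k)$-size table.
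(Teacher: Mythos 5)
Your high-level plan is the same as the paper's: rerun the machinery behind Theorem~\ref{thm:main-exp-w} and replace the optimization DP by a counting DP over the same states, taking care that each connected vertex cover is counted exactly once. However, you have misidentified the shape of that machinery, and this matters for where the count-preservation argument has to be made. The algorithm for Theorem~\ref{thm:main-exp-w} is not a single Dreyfus--Wagner-style DP on $O(k)$ terminals: it is iterative compression (Lemma~\ref{lem:compression}) producing a connected vertex cover $Z$ with $|Z|\le k+2$, followed by an enumeration of all $2^{|Z|}$ subsets $Z_1=Z\cap(\text{solution})$, and for each $Z_1$ a \emph{separate} bipartite Steiner instance whose terminal set is the set of connected components of $G[Z_1\cup V_1]$; the total $2^k$ bound comes from the combinatorial inequality of Lemma~\ref{lem:bound}, not from a single $2^{|T|}$ table. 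For counting you must therefore also verify that every connected vertex cover $S$ of size at most $k$ is generated by exactly one branch of the outer enumeration --- which holds because $Z_1=S\cap Z$ determines the branch and then $V_1$ and $X=S\setminus(Z\cup V_1)$ are forced --- and you must redo the preprocessing: ``assume $G$ is connected and delete isolated vertices'' is not count-preserving (an edgeless graph has $1+|V|$ connected vertex covers of size $\le k$, and isolated vertices must be argued to lie in no cover rather than silently removed). The paper handles both points explicitly.

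The more serious gap is inside the DP itself. Your claim that ``only the arithmetic inside each table entry changes, not the set of updates performed'' is false for the transition used in Lemma~\ref{lem:bipst}, namely $t(T_0\cup N(v),j+1)\leftarrow t(T_0,j)+\weight(v)$ for $v\in N(T_0)$: replacing $\min$ by $+$ there counts each set $X$ once per admissible order in which its elements can be added while maintaining connectivity, which is typically far more than once. You do flag the ``unique chain of transitions'' requirement, but neither of your proposed fixes discharges it --- making the state ``the vertex set of the partial cover'' does not make the assembly order unique (and would blow up the state space), and a canonical-spanning-tree normalization is the wrong object here because the things being counted are vertex sets, not trees. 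What is actually needed is a different recurrence for the table $t(T_0,j)$, redefined as the number of sets $X$ with $|X|=j$, $N(X)\subseteq T_0$ and $G[T_0\cup X]$ connected (note $\subseteq$ rather than $=$); the paper states this redefinition and leaves the recurrence to the reader, but it is a genuinely new design step, not a semiring substitution, and your write-up does not yet contain it.
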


Recently Cygan et al~\cite{seth} have shown that unless the {\em Strong
Exponential Time Hypothesis} (SETH) fails, it is not possible to count the
number of connected vertex covers of size at most $k$ in $O^*((2-\varepsilon)^k)$ time,
for any constant $\varepsilon > 0$.
Consequently our counting algorithm is tight under SETH,
which is an example of few parameterized problems with nontrivial solutions
for which there exists an evidence of optimality.

When restricted to polynomial space, we prove that the weighted variant can still be solved in $O^*(2^k)$ running time,
assuming weights are polynomially bounded integers.

\begin{theorem}
\label{thm:main-poly}
\wcvc with polynomially bounded integer weights can be solved in $O(2^k n^{O(1)})$ time and polynomial space.
\end{theorem}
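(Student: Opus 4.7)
The plan is to port the pipeline used for Theorem~\ref{thm:main-exp-w} to polynomial space. First I would apply iterative compression to reduce \wcvc to the compression variant \compwcvc, paying the usual factor of $2^{k+1}$ to guess the intersection of the sought solution with a current connected vertex cover of size $k+1$. Then, following the same route as in the exponential-space weighted algorithm, \compwcvc should reduce to a weighted bipartite Steiner-tree-style problem with at most $k$ terminals. This reduction is weight-oblivious: polynomially bounded integer vertex weights on the CVC instance translate into polynomially bounded integer weights on the Steiner instance, and the parameter stays at $k$.

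The heart of the argument is to replace the exponential-space subset dynamic programming underlying Theorem~\ref{thm:main-exp-w} by a polynomial-space $O^*(2^k)$ subroutine for the resulting weighted Steiner problem. The plan is an inclusion--exclusion expansion over the terminal set $T$ of size at most $k$: express the minimum-weight connecting structure as an alternating combination, over all $X\subseteq T$, of a quantity describing (weighted) branching walks in the auxiliary graph that avoid $X$. Each summand is computed independently by a Bellman--Ford-flavoured dynamic program whose state is a vertex together with an accumulated length up to the global weight upper bound. Because weights are integers bounded by a polynomial in $|V|$, any feasible total weight is polynomial, so the walk-length DP has polynomial size, each summand runs in polynomial time and space, and summing the $2^k$ contributions gives $O(2^k n^{O(1)})$ time in polynomial space.

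The main obstacle is exactly this last step: the natural DP indexed by subsets of $T$ uses $\Omega(2^k)$ space, and the only known way to flatten it to polynomial space while preserving the $O^*(2^k)$ time bound is the inclusion--exclusion reformulation, which hinges on the inner subproblem being solvable in polynomial time and space. This is precisely where the assumption of polynomially bounded integer weights is used: without it, the walk-length-indexed DP would not fit in polynomial space. Once the optimum value is known, an actual minimum-weight connected vertex cover is recovered by standard self-reducibility: for each vertex $v$ one decides whether some optimum contains $v$ by re-running the decision algorithm on an instance that forces $v$ in or out (e.g.\ by setting $\weight(v)$ to $0$, or by attaching a heavy-weight pendant), peeling vertices off one by one at only a polynomial blow-up.
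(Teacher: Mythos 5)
Your proposal is correct and follows essentially the same route as the paper: keep the iterative-compression plus bipartite-Steiner-tree pipeline of Theorem~\ref{thm:main-exp-w}, and swap the exponential-space subset DP for a polynomial-space $O^*(2^{|T|})$ Steiner subroutine. The inclusion--exclusion-over-terminals, branching-walk, weight-indexed DP you sketch is precisely the algorithm of Nederlof that the paper invokes as a black box, including its reliance on polynomially bounded integer weights.
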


\paragraph{Related work}
{\sc Vertex Cover} is one of the longest studied problem in the parameterized complexity.
The currently fastest known parameterized algorithm
for the {\sc Vertex Cover} problem is due to Chen et al.,
running in $O(1.2738^k+kn)$ time~\cite{chen-vc}.
Recently, new parameterizations of {\sc Vertex Cover} are considered,
when the parameter is $k-|M|$~\cite{almost2sat-fpt}, where $M$ is a maximum cardinality matching,
or $k-\mathrm{LP}$, where $\mathrm{LP}$ is the optimum value of a natural linear programming relaxation~\cite{vclp1,vclp2}.

A notion very close to fixed parameter tractability, or even a subfield of it, is kernelization.
We call a polynomial time preprocessing routine a kernel, if given an instance $I$
with parameter $k$ the algorithm produces a single instance $I'$ with parameter $k'$,
such that $I'$ is a YES-instance iff $I$ is a YES-instance, and moreover $|I'|+k' \le g(k)$.
It is well known that a problem admits a kernel if and only if it is kernelizable,
however we are mostly interested in kernelization algorithms with the function $g$ being a polynomial.
Unfortunately, for \cvc no polynomial kernel exists as shown by Dom et al.~\cite{colours-and-ids}, unless \compass.

\paragraph{Organization} In Section~\ref{sec:alg} we prove Theorem~\ref{thm:main-exp-w}.
For the sake of presentation we describe small differences needed to solve the counting variant,
that is to prove Theorem~\ref{thm:main-exp-c}, in separate Section~\ref{sec:counting}.
Next, in Section~\ref{sec:poly-space} we prove Theorem~\ref{thm:main-poly}
and finally, we finish the article with conclusions and open problems in Section~\ref{sec:conclusions}.

\paragraph{Notation.} We use standard graph notation.
For a graph~$G$, by~$V(G)$ and~$E(G)$ we denote its vertex and edge sets, respectively.
When it is clear which graph we are describing we use $n$ as the number of its vertices
and $m$ as the number of its edges.
For~$v \in V(G)$, its neighborhood~$N(v)$ is defined as~$N(v) = \{u: uv\in E(G)\}$, and~$N[v] = N(v) \cup \{v\}$ is the closed neighborhood of~$v$.
We extend this notation to subsets of vertices:~$N[X] = \bigcup_{v \in X} N[v]$ and~$N(X) = N[X] \setminus X$.
For a set~$X \subseteq V(G)$ by~$G[X]$ we denote the subgraph of~$G$ induced by~$X$.
For a set~$X$ of vertices or edges of~$G$, by~$G \setminus X$ we denote the graph
with the vertices or edges of~$X$ removed; in case of vertex removal, we remove
also all the incident edges.
For two subsets of vertices $X,Y \subseteq V$ by $E(X,Y)$ we denote the set of edges
with one endpoint in $X$ and the other in $Y$. In particular by $E(X,X)$ we 
denote the set of edges with both endpoints in $X$.

\section{Algorithm}
\label{sec:alg}

In this section we prove Theorem~\ref{thm:main-exp-w}.
As the starting point we use the iterative compression technique in Section~\ref{sec:alg:1}.
As a consequence we are left with a problem, where additionally each instance
is equipped with a connected vertex cover $Z$ of size at most $k+2$.
In Section~\ref{sec:alg:2} we show how to take advantage of the set $Z$
by showing a natural algorithm, solving a bipartite Steiner tree problem
as a subroutine (described in Section~\ref{sec:alg:4}).
The key part of the proof of Theorem~\ref{thm:main-exp-w}
is the time complexity analysis of the presented algorithm,
which relies on a combinatorial lemma proved in Section~\ref{sec:alg:3}.

\subsection{Iterative compression}
\label{sec:alg:1}

We start with a standard technique in
the design of parameterized algorithms, that is, 
iterative compression, introduced by Reed et al.~\cite{reed:ic}.
Iterative compression was also the first step of the 
Monte Carlo algorithm for \cvc~\cite{focs}.

We define a {\em compression problem}, where the input additionally contains a connected vertex cover $Z \subseteq V$.
The name {\em compression} might be misleading in our case, since in the problem
definition below we are not explicitly interested in compressing the solution,
but we want to find a minimum weight connected vertex cover using the size of $Z$
as our structural parameter. In particular not only we use the fact that $Z$ is a vertex cover (which ensures that $V\setminus Z$ is an independent set), but 
also we use the fact that $G[Z]$ is connected, which is crucial in the time
complexity analysis of our algorithm.

\defproblemugoal{\compwcvc (\compwcvcshort)}
{An undirected graph $G=(V,E)$, a weight function $\weight:V \rightarrow \mathbb{R}_+$, an integer $k$ and a connected vertex cover $Z \subseteq V$ of $G$.}{$|Z|$}
{Find a minimum weight connected vertex cover of cardinality at most $k$.} 

In Section~\ref{sec:alg:2} we prove the following lemma providing a parameterized algorithm for the above compression problem.

\begin{lemma}
\label{lem:compression-exp}
\compwcvcshort can be solved in $O(2^{|Z|}k(|V|+|E|))$ time and $O(2^{|Z|}k)$ space.
Moreover, when the weight function is uniform, we can solve the problem in $O(2^{|Z|}(|V|+|E|))$
time and $O(2^{|Z|})$ space.
\end{lemma}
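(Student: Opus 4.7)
The approach is to enumerate the candidate intersection $Y := X \cap Z$ over all $2^{|Z|}$ subsets of $Z$, reduce each residual problem to a single instance of \wbipst, and amortise the total work. Two structural observations drive the reduction. First, since $Z$ is a vertex cover, $V \setminus Z$ is independent, so every edge of $G[Z]$ must be covered from within $Z$; thus any admissible $Y$ is itself a vertex cover of $G[Z]$. Second, each edge from $Z \setminus Y$ to $V \setminus Z$ forces its outer endpoint into the solution, so the set $B := N(Z \setminus Y) \cap (V \setminus Z)$ lies entirely in $X$. We immediately discard any $Y$ that fails either of these tests or satisfies $|Y| + |B| > k$.

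For the surviving $Y$, the task becomes to pick a minimum-weight $X' \subseteq V \setminus (Z \cup B)$ of cardinality at most $k - |Y| - |B|$ such that $G[Y \cup B \cup X']$ is connected. Because $V \setminus Z$ is independent, any vertex of $X \setminus Y$ interacts with the rest of $X$ only through its edges to $Y$, which makes this exactly a bipartite Steiner problem in $G[Y \cup (V \setminus Z)]$ with terminal set $Y \cup B$. I would hand it off to the \wbipst subroutine promised in Section~\ref{sec:alg:4} and return the cheapest solution produced over all $Y$; correctness then follows immediately from the two forcing observations above.

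The main obstacle, as the excerpt already anticipates, is the runtime accounting. A black-box per-$Y$ bound on the Steiner subroutine, summed over all $Y \subseteq Z$, falls short of the target $O(2^{|Z|} k(|V|+|E|))$ because the cost on $Y$ scales with a structural parameter of $G[Y]$ (essentially the number of components that must be reconnected through $V \setminus Z$), giving in the worst case a sum of order $3^{|Z|}$ rather than $2^{|Z|}$. The combinatorial lemma of Section~\ref{sec:alg:3}, which crucially exploits the fact that $G[Z]$ is connected, is precisely what is needed to amortise these per-$Y$ costs back down to the target aggregate. For the space bound, only one Steiner instance is active at a time, so the budget reduces to the size of its DP table; the extra factor of $k$ comes from tracking the cardinality constraint $|X'| \le k - |Y| - |B|$ alongside the weight, and it vanishes under uniform weights since minimum-weight and minimum-cardinality then coincide, which yields the second statement of the lemma.
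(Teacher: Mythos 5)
Your proposal follows essentially the same route as the paper: enumerate $Y = X\cap Z$ over all $2^{|Z|}$ subsets, observe that $Y$ must be a vertex cover of $G[Z]$, force $B=N(Z\setminus Y)\cap(V\setminus Z)$ into the solution, reduce the remainder to the bipartite Steiner subroutine, and amortise the per-choice cost via the combinatorial lemma applied to the connected graph $G[Z]$. Two details you leave implicit must be made explicit for the runtime accounting to close. First, $Y\cup B$ cannot be handed to \wbipst directly as the terminal set: the subroutine requires the terminals to be independent, and its exponent is the number of terminals, so the paper contracts each connected component of $G[Y\cup B]$ to a single terminal, making $|T|=|\cc(G[Y\cup B])|$. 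Second, Lemma~\ref{lem:bound} amortises $\sum_Y 2^{|\cc(G[Y])|}$, so one needs $|\cc(G[Y\cup B])|\le|\cc(G[Y])|$; this holds only after discarding any $Y$ for which some forced vertex $v\in B$ has $N(v)\subseteq Z\setminus Y$ (such a $v$ would form its own terminal component, inflating the exponent beyond $|\cc(G[Y])|$, and in any case makes the choice infeasible). With these two fixes, together with the trivial special case $Y=\emptyset$, your argument coincides with the paper's.
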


%\begin{lemma}
%\label{lem:compression-poly}
%\compwcvcshort with polynomially bounded integer weights can be solved in $O(2^{|Z|}|V|^{O(1)})$ time and polynomial space.
%\end{lemma}

Having the above lemma we show how to efficiently find a connected vertex cover
of size at most $k$ (if it exists).

\begin{lemma}
\label{lem:compression}
Given an undirected graph $G=(V,E)$ and an integer $k$ one can find a connected
vertex cover of size at most $k$, or verify that it does not exist,
in $O(2^kk(|V|+|E|))$ time and $O(2^k)$ space.
\end{lemma}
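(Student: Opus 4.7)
I plan to obtain Lemma~\ref{lem:compression} from Lemma~\ref{lem:compression-exp} via the standard iterative compression technique for connected problems. First I dispose of the trivial disconnection case: if $G$ has two or more connected components that contain at least one edge each, then no connected vertex cover of $G$ can exist (any CVC must live in a single component), so I output ``no''; otherwise, after discarding isolated vertices, I may assume $G$ is connected.

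I fix a BFS spanning tree of $G$ rooted at an arbitrary vertex and list the vertices $v_1, v_2, \dots, v_n$ in BFS order, which guarantees that every prefix graph $G_i := G[\{v_1, \dots, v_i\}]$ is connected. I inductively maintain a minimum-cardinality connected vertex cover $Z_i$ of $G_i$, starting with $Z_1 := \emptyset$. For the step from $G_i$ to $G_{i+1}$, let $p$ be the BFS parent of $v_{i+1}$ in $V(G_i)$ and form the candidate $Z'_{i+1} := Z_i \cup \{v_{i+1}, p\}$. A direct check shows $Z'_{i+1}$ is a CVC of $G_{i+1}$ of size at most $|Z_i|+2 \le k+2$: the edges of $G_{i+1}$ missing from $G_i$ are all incident to $v_{i+1}$ and hence covered, and the set remains connected because $Z_i$ is connected in $G_i$, $p$ is either in $Z_i$ or adjacent in $G_i$ to a vertex of $Z_i$ (since $Z_i$ is a vertex cover of $G_i$), and $v_{i+1}$ hangs off $p$. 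I then call Lemma~\ref{lem:compression-exp} on $(G_{i+1}, \omega \equiv 1, k, Z'_{i+1})$ and let $Z_{i+1}$ be the returned minimum CVC of $G_{i+1}$; if its size exceeds $k$, I halt and output ``no''.

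The main subtle point, and the one I expect to be the chief obstacle, is the correctness of that ``no'' verdict: I need that whenever $G$ admits a CVC of size at most $k$, so does each intermediate $G_i$. Unlike for plain vertex cover or feedback vertex set, this monotonicity does not come from simple restriction, since intersecting a connected vertex cover of $G$ with $V(G_i)$ may disconnect it; it has to be established by an exchange argument that transforms an optimum CVC $S^*$ of $G$ into a CVC of $G_i$ of size at most $|S^*|$, rerouting the portion of $S^*$ outside $V(G_i)$ through the BFS spanning subtree $T_i$ of $G_i$ without increasing the size.

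For the complexity, I use Lemma~\ref{lem:compression-exp} in its uniform-weight form, so each invocation costs $O(2^{|Z'_{i+1}|}(|V|+|E|)) = O(2^k(|V|+|E|))$ time and $O(2^k)$ space, the latter being reusable across iterations. Combining this per-step cost with a careful bookkeeping that charges the work of each step against the newly inspected edges yields the claimed $O(2^k k(|V|+|E|))$ total time and $O(2^k)$ space bound.
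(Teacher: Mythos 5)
Your overall architecture (iterative compression on a growing sequence of connected graphs, invoking Lemma~\ref{lem:compression-exp} with the uniform weight function at each step) matches the paper, but the step you yourself flag as ``the chief obstacle'' is a genuine gap, and the way you propose to close it is not the right one. You build the sequence from \emph{induced prefixes} $G_i = G[\{v_1,\dots,v_i\}]$, which forces you to prove that $\mathrm{mincvc}(G_i)\le \mathrm{mincvc}(G)$ for every prefix. This monotonicity is false for connected induced subgraphs in general --- adding a single universal vertex to a graph can drastically \emph{decrease} its minimum connected vertex cover, since connectivity becomes free while only forced cover vertices pick up the new edges --- so any proof would have to exploit the BFS structure in an essential way, and the ``exchange argument rerouting through the BFS subtree'' you gesture at does not obviously control the size: reconnecting the components of $S^*\cap V(G_i)$ via tree paths can cost many more vertices than the $|S^*\setminus V(G_i)|$ you deleted. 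The paper sidesteps this entirely by working with \emph{contractions} rather than induced subgraphs: it orders the vertices so that every suffix $V_i=\{v_i,\dots,v_n\}$ induces a connected subgraph and defines $G_i$ as $G$ with $V_i$ contracted to a single vertex. Then $G_{i-1}$ is obtained from $G_i$ by contracting one edge, a connected vertex cover of size $\le k+2$ for $G_i$ is read off from the solution $X_{i-1}$ for $G_{i-1}$ exactly as in your argument, and --- crucially --- monotonicity is immediate because \cvc is closed under edge contraction (project any connected vertex cover of $G$ onto $G_i$). This is the missing idea; without it, or without an actual proof of your prefix monotonicity claim, the ``no'' answers of your algorithm are not justified.

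There is a second, smaller gap in your running-time accounting. Each invocation of Lemma~\ref{lem:compression-exp} costs $O(2^k(|V|+|E|))$ measured on the \emph{whole} current graph, not on the newly revealed edges, so ``charging the work against the newly inspected edges'' does not amortize: $n$ rounds give $O(2^k n(|V|+|E|))$. The paper obtains the factor $k$ by first running a linear-time $2$-approximation (the internal nodes of a DFS tree form a connected vertex cover of size at most twice the optimum), so that the iteration can start from a graph whose uncontracted part already contains a connected vertex cover of size at most $2k$, reducing the number of compression rounds to $O(k)$. You would need this (or an equivalent device) to reach the claimed $O(2^k k(|V|+|E|))$ bound.
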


\begin{proof}
First, let us assume that $G$ does not contain isolated vertices, since
we can remove them.
Moreover we can assume that $G$ is connected, since if $G$ contains
at least two connected components (and no isolated vertices)
then it can not admit a connected vertex cover of any size.
Therefore, let $V=\{v_1,...,v_n\}$ be an ordering of vertices, such that for each $1 \le i \le n$,
the graph $G[V_i]$ is connected, where $V_i = \{v_i,\ldots,v_n\}$.
For $1 \le i \le n$ let $G_i$ be the graph $G$, with vertices of $V_i$ identified to a single
vertex.
Alternatively, we can say that $G_i$ comes from a contraction of the set of edges
of a spanning tree of $G[V_i]$.
Since \cvc is closed under edge contractions, we infer that if there is no
connected vertex cover of size at most $k$ in $G_i$, then
clearly there is no connected vertex cover of size at most $k$ in $G$.

We are going to construct a sequence of sets $X_i \subseteq V(G_i)$
of size at most $k$, such that $X_i$ is a connected vertex cover of $G_i$.
First, observe that the set $X_1=\emptyset$ is a connected vertex cover of $G_1$ of size at most $k$.
Next, let us consider each value of $i=2,\ldots,n$ one by one.
Observe that there is an edge $e$ in $E(G_i)$, such
that the graph $G_{i-1}$ is exactly the graph $G_i$ with the edge $e$ contracted.
In particular as $e$ we may take any edge between $v_{i-1}$ and $V_i$.
Let $x$ be the vertex in $G_{i-1}$ which corresponds to the set $V_{i-1}$
and let $y$ be the vertex in $G_i$ corresponding to the set $V_i$.
We claim that $Z=(X_{i-1}\setminus \{x\}) \cup \{v_i,y\}$ is a connected
vertex cover of $G_i$ of size at most $k+2$.
Since $|X_{i-1}| \le k$ the bound on the size of $Z$ holds.
Moreover, since $X_{i-1}$ is a vertex cover of $G_{i-1}$,
the set $Z$ is a vertex cover of $G_i$.
Finally, $G_i[Z]$ is connected, because either $x$ is contained in $X_{i-1}$,
or a neighbour of $x$ belongs to $X_{i-1}$, or $x$ is an isolated vertex 
which means that $i=2$ and then $Z=V(G_2)$ induces a connected subgraph.

If, for a fixed $i$, we use Lemma~\ref{lem:compression-exp} for the \compwcvcshort instance $(G_i,\omega,k,Z)$,
with $\omega$ being a uniform unit weight function,
then in $O(2^{|Z|}(n+m))=O(2^k(n+m))$ time and $O(2^{|Z|})=O(2^k)$ space we can find a set $X_i$,
which is a connected vertex cover of $G_i$ of cardinality at most $k$,
or verify that no connected vertex cover of cardinality at most $k$ in the graph $G$ exists.
Since $G_n=G$, the set $X_n$ is a connected vertex cover of $G$ of size at most $k$,
which we can find in $O(2^kn(n+m))$ time, because we use Lemma~\ref{lem:compression-exp}
exactly $n-1$ times.
In order to reduce the polynomial factor from $n(n+m)$ to $k(n+m)$ observe,
that if we order the set $V$, such that the set $\{v_{i-\ell+1},\ldots,v_n\}$
forms a connected vertex cover of the graph $G$, then as the set $X_{i-\ell+1}$
we can set a singleton set containing the vertex corresponding to $V_{i-\ell+1}$
and reduce the number of rounds in the inductive process from $n$ to $\ell$.
However, a simple $O(n+m)$ time $2$-approximation of the \cvc problem
is known~\cite{guha-khuller}, which just takes as the solution
the set of internal nodes of a depth first search tree of the given graph\footnote{For the sake of 
completeness in Appendix~\ref{app:cvc-apx} we present a proof of correctness of this algorithm.}.
Therefore, assuming a vertex cover of size at most $k$ exists,
we can find a connected vertex cover of size at most $2k$ in $O(n+m)$ time
and consequently reduce the number of rounds of the inductive process to at most $2k$,
which leads to $O(2^kk(n+m))$ time complexity.
\qed
\end{proof}

By Lemma~\ref{lem:compression} we can find a connected vertex cover $Z$ of size at most $k$, if it exists.
Consequently we can use Lemma~\ref{lem:compression-exp}, which proves Theorem~\ref{thm:main-exp-w}.

%At this point a reader might wonder why we do not add an assumption $|Z| \le k+1$ to the \cgfvsshort problem
%definition and parameterize the problem solely by $k$. 
%The reason for this is that in Section~\ref{sec:alg:3} we will solve the \cgfvsshort problem
%recursively, sometimes decreasing the value of $k$ without decreasing the size of $Z$,
%and to always work with a feasible instance of the \cgfvsshort problem we avoid adding the $|Z| \le k+1$
%assumption to the problem definition.

\subsection{Compression algorithm}
\label{sec:alg:2}

In this section we present a proof of Lemma~\ref{lem:compression-exp}.
The advantage we have while solving \compwcvcshort instead
of \wcvcshort is the additional set $Z$, which forms a connected
vertex cover of $G$ and the size of $Z$ is our new parameter.
We show how to use the set $Z$ as an insight into the structure of the graph 
and solve compression problem efficiently.
The algorithm itself is straightforward, but the crucial part
of its time complexity analysis lies in the following combinatorial
bound, which we prove in Section~\ref{sec:alg:3}.

\begin{lemma}
\label{lem:bound}
For any connected graph $G=(V,E)$ we have
\begin{equation}
\sum_{\substack{V_1 \subseteq V \\ E(G[V\setminus V_1])=\emptyset}} 2^{|\cc(G[V_1])|} \le 3\cdot 2^{|V|-1},
\end{equation}
where by $\cc(H)$ we denote the set of connected components of a graph $H$.
\end{lemma}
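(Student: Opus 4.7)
My plan is to rewrite the sum combinatorially, reduce to the case when $G$ is a tree, and then induct on a leaf.

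For a fixed $V_1 \subseteq V$ with $V \setminus V_1$ independent, the factor $2^{|\cc(G[V_1])|}$ counts the ordered ways to split the components of $G[V_1]$ into two labelled parts $A$ and $B$, which equals the number of ordered pairs $(A,B)$ with $A \sqcup B = V_1$ and no edge of $G$ between $A$ and $B$. Putting $C = V \setminus V_1$, the left-hand side of the lemma equals the number of ordered triples $(A,B,C)$ partitioning $V$ such that (i) $C$ is independent in $G$ and (ii) $E_G(A,B) = \emptyset$; call such a triple $G$-admissible.

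The key observation is that both conditions only become easier to satisfy when edges of $G$ are removed. Thus if $T$ is any spanning tree of $G$---which exists because $G$ is connected---then every $G$-admissible triple is also $T$-admissible, so the sum for $G$ is at most the corresponding quantity for $T$, and it suffices to prove that the number of $T$-admissible triples equals exactly $3 \cdot 2^{|V|-1}$.

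I would establish this last equality by induction on $|V|$. The base case $|V|=1$ is immediate from the three placements of the single vertex. For $|V| \ge 2$, pick a leaf $v$ of $T$ with unique neighbour $u$, and let $T' = T - v$. Every $T$-admissible triple restricts to a $T'$-admissible triple on $V \setminus \{v\}$, and conversely the single edge $uv$ forbids exactly one label for $v$ given the label of $u$ (the part of the cover opposite to $u$ when $u \in A \cup B$, and the part $C$ when $u \in C$), so there are always exactly two legal extensions. Hence the count doubles when a leaf is added, yielding $3 \cdot 2^{|V|-1}$. The one place that calls for any cleverness is the combinatorial reinterpretation of $2^{|\cc(G[V_1])|}$ as a labelled bipartition count of components; after that, monotonicity in the edge set and the leaf induction are essentially routine, and the bound in fact becomes an equality precisely when $G$ is a tree.
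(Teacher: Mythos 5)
Your argument is correct and is essentially the paper's: your triples $(A,B,C)$ are exactly the paper's pairs $(V_1,\mathcal{C})$ (with $A=\bigcup\mathcal{C}$, $B=V_1\setminus A$, $C=V\setminus V_1$), and in both cases the bound comes from fixing a spanning tree along which every vertex except one distinguished vertex has only two admissible labels. The only difference is packaging: the paper builds an explicit injection into a set of size $3\cdot 2^{|V|-1}$ by labelling vertices top-down from a root of the spanning tree, while you first relax to the spanning tree by edge-monotonicity and then count exactly by deleting leaves --- an equivalent route that additionally shows the inequality is an equality for trees.
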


Observe, that in the above lemma we sum over all sets $V_1$, that form a vertex cover of $G$.
The second tool we use in the proof of Lemma~\ref{lem:compression-exp}
is the following lemma solving the node-weighted Steiner tree problem in bipartite graphs,
where both the terminals and non-terminals form independent sets. 
The proof of it can be found in Section~\ref{sec:alg:4}.
%of the \bipst problem can be found in Section~\ref{sec:alg:4}).

\begin{lemma}
\label{lem:bipst}
Let $G=(V,E)$ be a bipartite graph and $T \subseteq V$
be a set of terminals, such that $T$ and $V \setminus T$ are independent sets.
For a given weight function $\weight : V\setminus T -> \mathbb{R}_+$
and an integer $k$ in $O(2^{|T|}k(|V|+|E|))$ time and $O(2^{|T|}k)$ space we can find a minimum weight
subset $X \subseteq V \setminus T$ of cardinality at most $k$, such 
that $G[T \cup X]$ is connected, or verify that such a set does not exist.
Moreover for a uniform weight function $\weight$ we improve the running time to $O(2^{|T|}(|V|+|E|))$
and space usage to $O(2^{|T|})$.
\end{lemma}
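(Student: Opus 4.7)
The plan is a forward dynamic program over states $(S, i)$ with $S \subseteq T$ and $0 \le i \le k$. Since $T$ and $V\setminus T$ are both independent, every non-terminal $x$ satisfies $N(x) \subseteq T$, so we may view $x$ as a hyperedge on $T$; then connectivity of $G[T\cup X]$ is equivalent to connectivity of the hypergraph on $T$ with hyperedges $\{N(x) : x \in X\}$. Let $f[S][i]$ be the minimum value of $\weight(X)$ over $X \subseteq V\setminus T$ with $|X| \le i$ such that $G[S\cup X]$ is connected; the answer is $\min_{0\le i\le k} f[T][i]$. The base case sets $f[\{t\}][0] = 0$ for every $t \in T$. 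The transition, from $(S,i)$ with $f[S][i] < \infty$ and every non-terminal $x$ with $N(x) \cap S \neq \emptyset$, relaxes $f[S \cup N(x)][i+1] \gets f[S][i] + \weight(x)$; the guard ensures that attaching $x$ keeps the partial Steiner subgraph connected while absorbing $N(x)\setminus S$ into it.

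For correctness, an easy induction on $i$ shows that every state $(S,i)$ with finite value corresponds to a witness $X$ of size $\le i$ with $G[S\cup X]$ connected. For the converse, take any optimum $X^*$, pick any terminal $t$, and run a BFS in $G[T\cup X^*]$ from $t$: because both sides of the bipartition are independent, each non-terminal enters the tree through a terminal already visited, and the order in which non-terminals appear yields a valid sequence of DP transitions reaching $(T,|X^*|)$ with total cost $\weight(X^*)$. The bipartite hypothesis is essential here: otherwise a non-terminal could first be reached through another non-terminal, and our ``attach via a terminal already in $S$'' rule would not witness that step.

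For the complexity, we precompute $N(x) \subseteq T$ as a bitmask for every non-terminal $x$ in $O(|V| + |E|)$ time. There are $O(2^{|T|} k)$ states, and for each we relax transitions across all non-terminals; amortizing the cost of computing $N(x) \cap S$ and $S \cup N(x)$ edge by edge yields $O(|V| + |E|)$ work per state, giving the claimed $O(2^{|T|} k (|V| + |E|))$ time and $O(2^{|T|} k)$ space, with the minimizer $X$ recovered by parent-pointer backtracking. In the uniform-weight case $\weight(X)$ depends only on $|X|$, so we collapse the $i$-dimension into $g[S] = \min\{i : (S,i)\text{ is reachable}\}$, compute it by BFS on the state graph, and obtain $O(2^{|T|}(|V|+|E|))$ time and $O(2^{|T|})$ space. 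The main subtlety is matching the DP's incremental notion of connectivity to the global condition on $G[T\cup X]$, which the BFS argument pins down.
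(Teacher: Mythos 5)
Your proof is correct and follows essentially the same route as the paper: a dynamic program over pairs (subset of terminals, number of non-terminals used), seeded at singletons $\{t\}$, with the transition that attaches a non-terminal $x$ meeting the current terminal set and absorbs $N(x)$, and the same collapse of the cardinality dimension in the uniform-weight case. The only cosmetic difference is that your stated invariant for $f[S][i]$ does not quite match what the transitions compute (the DP implicitly enforces $N(X)\subseteq S$, as the paper's definition does explicitly), but your soundness-plus-BFS-completeness argument does not depend on that exact characterization, so the proof stands.
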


Having Lemmas~\ref{lem:bound} and~\ref{lem:bipst} we can prove Lemma~\ref{lem:compression-exp}.

\begin{proof}[of Lemma~\ref{lem:compression-exp}]
Similarly as in the proof of Lemma~\ref{lem:compression} we may assume that 
the graph $G$ is connected.
We start with guessing, by trying all $2^{|Z|}$ possibilities,
a subset $Z_1$ of $Z$ that is a part of a connected vertex cover
and denote $Z_0 = Z\setminus Z_1$.

First, let us consider a special case, that is $Z_1=\emptyset$.
Then we need to take the whole set $V\setminus Z$ to cover
the edges $E(Z_1,V\setminus Z)$, since each vertex of $V \setminus Z$ has
at least one neighbour in $Z$ (otherwise it would be isolated).
It is easy to verify whether $(V\setminus Z)$ is a connected vertex cover of size at most $k$.

Therefore, we assume that $Z_1 \neq \emptyset$ and moreover $E(Z_0,Z_0)=\emptyset$,
since otherwise there is no vertex cover disjoint with $Z_0$.
Let us partition the set $V \setminus Z$ into $V_1=(V\setminus Z) \cap N(Z_0)$ and $V_0 = (V\setminus Z) \setminus V_1$.
Less formally, we split the vertices of $V\setminus Z$ depending 
on whether they have a neighbour in $Z_0$ or not.
Since we need to cover the edges adjacent to $Z_0$,
any vertex cover disjoint with $Z_0$ contains all the vertices of $V_1$.

Observe, that if there exists a vertex $v \in V_1$, such that $N(v) \subseteq Z_0$,
no vertex cover disjoint with $Z_0$ is connected, since the vertex $v$
can not be in the same connected component as any vertex of $Z_1$,
meaning that this choice of $Z_0$ is invalid (see Fig.~\ref{fig1}).
Consequently each vertex in $V_1$ has at least one neighbour in $Z_1$.
Moreover, $Z_1 \cup V_1$ forms a vertex cover of the graph $G$, 
as $V_0 \cup Z_0$ is an independent set.
Hence we want to investigate how $Z_1 \cup V_1$ can be complemented with vertices of $V_0$,
to make the vertex cover induce a connected subgraph.
Let $G'$ be the graph $G[Z_1 \cup V_0 \cup V_1]$ with connected components of $G[Z_1 \cup V_1]$
contracted to single vertices.
Denote the vertices corresponding to contracted components of $G[Z_1 \cup V_1]$
as $T$.
Note that $G'$ is bipartite, since $G[V_0]$ is an independent set.
By Lemma~\ref{lem:bipst} we can find a minimum weight set $X \subseteq V_0$
of cardinality at most $(k-|Z_1|-|V_1|)$, such that $G'[T \cup X]$ is connected,
which is equivalent to $G[Z_1 \cup V_1 \cup X]$ being connected.
Observe, that the size of the set $T$ is upper bounded by the number of connected components
of the induced subgraph $G[Z_1]$, as each vertex of $V_1$ has at least one neighbour in $Z_1$.
Therefore, by Lemma~\ref{lem:bipst}, for a fixed choice of $Z_1$ we can find
the set $X$ in 
$O(2^{|T|}(k-|Z_1|-|V_1|)(|V(G')|+|E(G')|)=O(2^{|\cc(G[Z_1])|}k(|V(G)|+|E(G)|))$ time and $O(2^{|T|}(k-|Z_1|-|V_1|))=O(2^kk)$ space.
Moreover for a uniform weight function, by Lemma~\ref{lem:bipst}, the running time is $O(2^{|\cc(G[Z_1])|}(|V(G)|+|E(G)|))$
and space usage is $O(2^k)$.

\begin{figure}[htp]
\begin{center}
\includegraphics[scale=0.8]{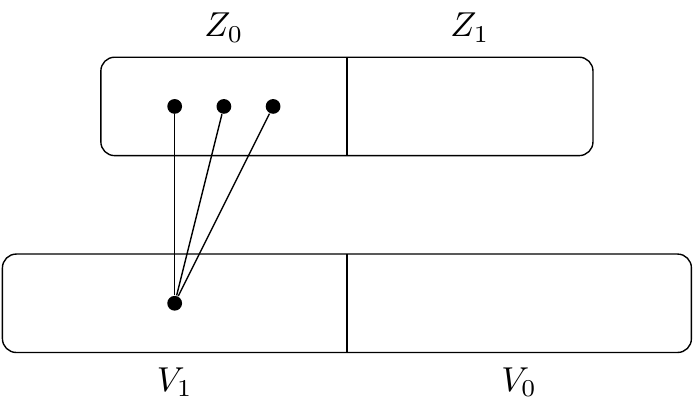}
\caption{An example of invalid choice of $Z_0$, since a vertex of $V_1$ has neighbours in $Z_1$.}
\label{fig1}
\end{center}
\end{figure}

Summing up the running time over all the choices of $Z_1$, for which $Z_0$ is an independent set,
by Lemma~\ref{lem:bound} applied to the graph $G[Z]$ we prove the total running time 
of our algorithm is $O(2^kk(|V(G)|+E|(G)|))$ for a general weight function and $O(2^k(|V(G)|+|E(G)|))$
for a uniform weight function.  \qed
\end{proof}

\subsection{Combinatorial bound}
\label{sec:alg:3}

Now we prove Lemma~\ref{lem:bound}, where we reduce the trivial $3^{|V|}$ bound to $3 \cdot 2^{|V|-1}$,
by using a similar idea, as was previously used for {\sc Bandwidth}~\cite{bw-wg,bw-tcs} and \cvc~\cite{focs}.

\newcommand{\io}{\mathbf{io}}
\newcommand{\ii}{\mathbf{ii}}
\newcommand{\oo}{\mathbf{o}}
\newcommand{\aaa}{\mathbf{a}}
\newcommand{\bbb}{\mathbf{b}}

\begin{proof}[of Lemma~\ref{lem:bound}]
Note, that we may rewrite the sum we want to bound as follows:

$$\sum_{\substack{V_1 \subseteq V \\ E(G[V\setminus V_1])=\emptyset}} 
     2^{|\cc(G[V_1])|}  = |\{(V_1,\mathcal{C}) : V_1 \subseteq V, \mathcal{C} \subseteq \cc(G[V_1]), E[G[V\setminus V_1]]=\emptyset\}|\,.$$

That is we count the number of pairs $(V_1,\mathcal{C})$, such that $V_1$
forms a vertex cover of $G$ and $\mathcal{C}$ is any subset
of connected components of the subgraph induced by $V_1$.
Denote the set of all pairs $(V_1,\mathcal{C})$ we are counting as $\mathcal{S}$.
Observe, that we can easily construct an injection $\phi$ from $\mathcal{S}$
to $\{\ii, \io, \oo\}^{|V|}$,
where for a pair $(V_1,\mathcal{C})$ as $\phi((V_1,\mathcal{C}))(v)$
we set:
\begin{itemize}
  \item $\ii$ (in-in) when $v \in V_1$ and the connected component 
  of $G[V_1]$ containing $v$ belongs to $\mathcal{C}$,
  \item $\io$ (in-out) when $v \in V_1$ and the connected component 
  of $G[V_1]$ containing $v$ does not belong to $\mathcal{C}$,
  \item $\oo$ (out) when $v \not\in V_1$.
\end{itemize}
Having any function $f:V \rightarrow \{\ii, \io, \oo\}$
we can reconstruct a pair $(V_1,\mathcal{C})$ (if it exists),
such that $\phi((V_1,\mathcal{C}))=f$.
However, the injection $\phi$ is not a surjection, for at least two reasons.
Consider any $f \in \phi(\mathcal{S})$.
Firstly, for any edge $uv \in E$, we have $f(u) \in \{\ii,\io\}$ 
or $f(v) \in \{\ii,\io\}$, since otherwise $V_1$ is not a vertex cover of $G$.
Secondly, for any edge $uv \in E$, if we have $f(u) \in \{\ii,\io\}$,
then either $f(v) = \oo$ or $f(v)=f(u)$, because
if both $u$ and $v$ belong to $V_1$, then they are are part
of exactly the same connected component $C$ of $G[V_1]$, and therefore
knowing $f(u)$ we can infer whether $C \in \mathcal{CC}$ or $C \not\in \mathcal{CC}$.

Let us formalize the intuition above, to prove that for almost each
vertex we have at most two, instead of three possibilities.
Consider a spanning tree $T$ of $G$ and root it in an arbitrary vertex $r$.
We construct the following function $\phi':\mathcal{S} \rightarrow \{\ii,\io,\oo\} \times \{\aaa,\bbb\}^{V\setminus \{r\}}$.
For a given pair $(V_1,\mathcal{C}) \in \mathcal{S}$ we set $\phi'((V_1,\mathcal{C}))=(\phi((V_1,\mathcal{C}))(r),f)$, where the function $f:V\setminus \{r\} \rightarrow \{\aaa,\bbb\}$ is defined in a top-down manner, regarding the tree $T$, as follows.
Let $v \in V\setminus \{r\}$ and denote $p \in V$ as the parent of $V$ in $T$.
\begin{itemize}
  \item If $p \in V_1$, then if $v \in V_1$, we set $f(v)=\aaa$ and otherwise (if $v \not\in V_1$), we set $f(v)=\bbb$.
  \item If $p \not\in V_1$, then we have $v \in V_1$ (since otherwise
    $V_1$ would not be a vertex cover), 
  and if the connected component of $G[V_1]$ containing $v$
  belongs to $\mathcal{CC}$, then $f(v) = \aaa$, otherwise $f(v) = \bbb$.
\end{itemize}
Since $\phi'$ is also a surjection, we have $|\mathcal{S}| \le 3\cdot 2^{|V|-1}$,
and the lemma follows. An example showing both functions $\phi,\phi'$
is depicted in Fig.~\ref{fig2}.\qed

\begin{figure}[htp]
\begin{center}
\includegraphics[scale=1]{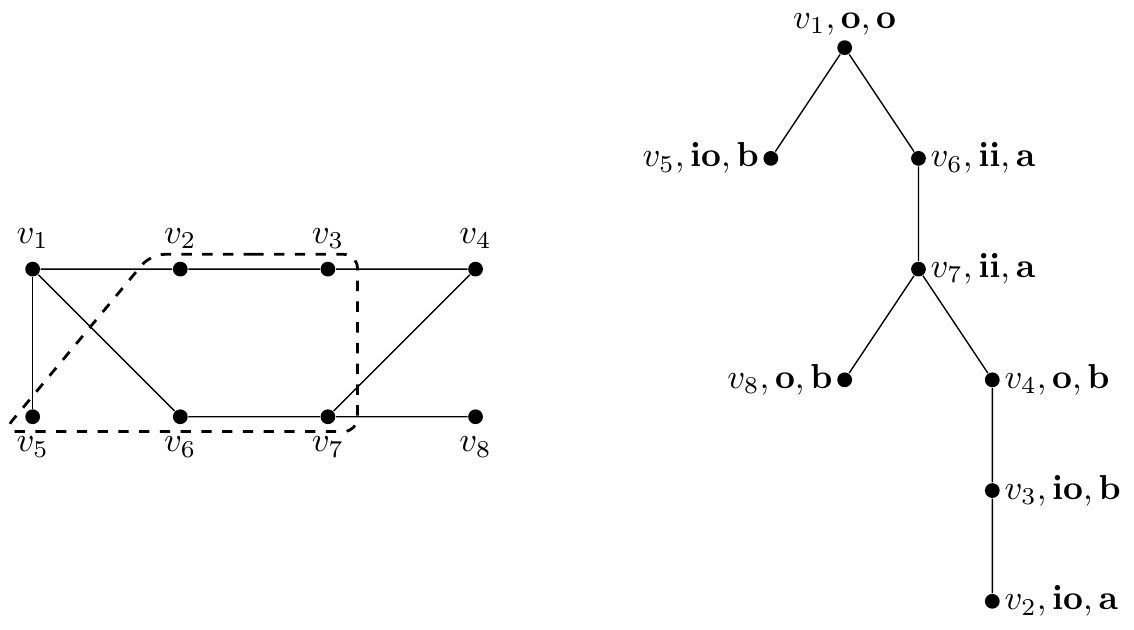}
\caption{The set $V_1$ is enclosed within the dashed border, whereas $\cc(G[V_1])=\{\{v_2,v_3\},\{v_5\},\{v_6,v_7\}\}$ and $\mathcal{C}=\{\{v_6,v_7\}\}$. On the right there is a tree $T$ rooted at $v_1$, where for each vertex
values assigned by $\phi((V_1,\mathcal{C}))$ and $\phi'((V_1,\mathcal{C}))$ are given.
Note that the for the root both $\phi((V_1,\mathcal{C}))$ and $\phi'((V_1,\mathcal{C}))$ assign exactly the same value.}
\label{fig2}
\end{center}
\end{figure}
\end{proof}

\subsection{Bipartite Steiner Tree}
\label{sec:alg:4}

Here we prove Lemma~\ref{lem:bipst}, which concerns the following bipartite variant
of the node-weighted Steiner tree problem.

\defproblemugoal{\wbipst}{An undirected bipartite graph $G=(V,E)$, a weight function $\weight:V \rightarrow \mathbb{R}_+$, an integer $k$ 
  and a set of terminals $T \subseteq V$, such that both $T$ and $V\setminus T$ are independent sets in $G$.}
{$|T|$}
{ Find a minimum weight subset $X \subseteq V\setminus T$ of size at most $k$, such that $G[T \cup X]$ is connected.}

\begin{proof}[of Lemma~\ref{lem:bipst}]
By a dynamic programming routine, for each subset $T_0 \subseteq T$ and integer $0 \le j \le k$ 
we compute the value $t(T_0,j)$, defined as the minimum weight of a subset $X \subseteq V \setminus T$, satisfying:
\begin{itemize}
  \item $|X|=j$,
  \item $N(X)=T_0$,
  \item $G[T_0 \cup X]$ is connected.
\end{itemize}
Less formally, the value $t(T_0,j)$ is the minimum weight of a set $X$ of cardinality exactly $j$,
such that $G[T_0 \cup X]$ induces a connected subgraph, and there is no edge from $X$ to $T\setminus T_0$.
Observe, that $\min_{1 \le j \le k} t(T,j)$ is the minimum weight solution for the \wbipst problem,
therefore in the rest of the proof we describe how to compute all the $(k+1)2^{|T|}$ values $t$ efficiently.

Initially, for each $t_0 \in T$ we set $t(\{t_0\},0):=0$, while all other values in the table $t$
are set to $\infty$.
Next, consider all the subsets $T_0 \subseteq T$ in the order of their increasing cardinality,
and for each integer $0 \le j < k$ and each vertex $v \in N(T_0)$ do
$$t(T_0 \cup N(v), j+1) := \min(t(T_0 \cup N(v), j+1), t(T_0,j) + \weight(v))\,.$$
Note, that the assumption $v \in N(T_0)$ ensures, that vertices $N(v) \setminus T_0$ 
get connected to the vertices of $T_0$.

With this simple dynamic programming routine we compute all the values $t(T_0,j)$
in $O(2^{|T|}k(|V(G)|+|E(G)|)$ time and $O(2^{|T|}k)$ space.
Note, that by standard methods we can reconstruct a set $X$ corresponding to the value $t(T,j)$
in the same running time.
Moreover, if the weight function is uniform, than the second dimension of our dynamic programming 
table is unnecessary, since the cardinality and weight of a set are equal.
This observation reduces both the running time and space usage by a factor of $k$.
\end{proof}

\section{Counting}
\label{sec:counting}

In this Section we present a proof of Theorem~\ref{thm:main-exp-c},
which is similar to the proof of Theorem~\ref{thm:main-exp-w}.

\begin{proof}[of Theorem~\ref{thm:main-exp-c}]
Similarly as in the proof of Theorem~\ref{thm:main-exp-c},
by using Lemma~\ref{lem:compression} in $O(2^kk(|V|+|E|))$ time we construct
a set $Z$, which is a connected vertex cover of $G$ of size at most $k$,
or verify that such a set does not exist.

Next, we proceed as in the proof of Lemma~\ref{lem:compression-exp},
however we have to justify the assumption that $G$ is a connected graph.
When $G$ contains at least two connected components containing at least two vertices each,
then there is no connected vertex cover in the graph $G$.
If there is one connected component containing at least two vertices, then
no connected vertex cover contains any of the isolated vertices, hence we can remove them.
Finally, when the graph contains only isolated vertices, then it admits
an empty connected vertex cover and $|V|$ connected vertex covers containing a single vertex only.

The rest of the proof of Lemma~\ref{lem:compression-exp} remains
unchanged and what we are left with is to show an $O(2^{|T|}k(|V|+|E|))$ running time algorithm
for the following \cbipst problem.

\defproblemugoal{\cbipst}{An undirected bipartite graph $G=(V,E)$, an integer $k$ 
  and a set of terminals $T \subseteq V$, such that both $T$ and $V\setminus T$ are independent sets in $G$.}
{$|T|$}
{ Find the number of subsets $X \subseteq V\setminus T$ of size at most $k$, such that $G[T \cup X]$ is connected.}

We do it similarly as in the proof of Lemma~\ref{lem:bipst}, that is for each $T_0 \subseteq T$
and each $0 \le j \le k$ we define the value $t(T_0,j)$, which is equal
to the number of subsets $X \subseteq V \setminus T$ of size exactly $j$,
such that $N(X) \subseteq T_0$ and $G[T_0 \cup X]$ is connected.
We leave the details of the dynamic programming routine to the reader.
\end{proof}

\section{Polynomial space}
\label{sec:poly-space}

The only place in our algorithm, where we use exponential space is when solving the \bipst
problem.
If, instead of using Lemma~\ref{lem:bipst} we use the algorithm of Nederlof~\cite{jesper-steiner},
running in $O(2^{|T|}n^{O(1)})$ time, we obtain an $O(2^k n^{O(1)})$ time and polynomial space algorithm
for the \cvc problem.
The algorithm by Nederlof solves also the weighted case, but only 
when the weights are polynomially bounded integers, which is enough to prove Theorem~\ref{thm:main-poly}.
Unfortunately, we are not aware of an algorithm which counts the number of solutions to the \bipst
problem in $2^{|T|}|V|^{O(1)}$ time and polynomial space (note that the algorithm of~\cite{jesper-steiner} counts the number of branching walks,
not the number of subsets of vertices inducing a solution).

\section{Conclusions and open problems}
\label{sec:conclusions}

In~\cite{focs} Cygan at al. we have shown a randomized $O(3^kn^{O(1)})$ algorithm for the {\sc Feedback Vertex Set} problem,
where we want to make the graph acyclic by removing at most $k$ vertices.
Is it possible to design a deterministic algorithm of the same running time?

The Cut\&Count technique presented in~\cite{focs} does not allow
neither to count the number of solution nor to solve problems with arbitrary real weights.
Nevertheless, for the \cvc problem we were able to solve both the weighted and counting
variants in the same running time. Is it possible to design $c^{{\mathrm{tw}}}n^{O(1)}$
time algorithms for counting or weighted variants of the connectivity problems
parameterized by treewidth for which the Cut\&Count technique can be applied?

Finally, we know that it is not possible to count the number 
of connected vertex covers of size at most $k$ in $O((2-\varepsilon)^kn^{O(1)})$ time, unless SETH fails.
Can we prove that we can not solve the decision version of the problem as well in such running time?

\section*{Acknowledgements}

We thank Daniel Lokshtanov, Marcin Pilipczuk and Micha\l{} Pilipczuk for helpful discussions.

\bibliographystyle{plain}
\bibliography{cvc}

\newpage
\appendix

\section{Approximation}
\label{app:cvc-apx}

In this section we show a simple algorithm, providing a $2$-approximation for the \cvc problem,
as observed by Guha and Khuller~\cite{guha-khuller}.

\begin{lemma}
Let $G$ be a connected graph and let $T$ be its depth first search tree.
The set of internal nodes of $T$ forms a connected vertex cover of $G$
of cardinality at most twice the size of a minimum connected vertex cover of $G$.
\end{lemma}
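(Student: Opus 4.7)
The plan is to verify three properties of the set $I$ of internal nodes of an arbitrarily rooted DFS tree $T$ of $G$: (i) $I$ is a vertex cover of $G$, (ii) $G[I]$ is connected, and (iii) $|I| \le 2|C^*|$, where $C^*$ denotes a minimum connected vertex cover.

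For (i), I will invoke the standard DFS property that every non-tree edge of $G$ is a back edge joining an ancestor with a descendant in $T$. If two distinct leaves $u,v$ of $T$ were joined by an edge of $G$, neither could be an ancestor of the other, a contradiction. Hence the set $L$ of leaves of $T$ is independent in $G$, so its complement $I = V \setminus L$ is a vertex cover. Property (ii) is even more immediate: deleting the leaves of a tree yields a (possibly empty) subtree of $T$, which provides a spanning tree of $I$ inside $G$, so $G[I]$ is connected.

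The main step is the approximation bound (iii), which I will obtain from a simple charging argument. Each vertex $v \in I$ is charged to a vertex of $C^*$ via the following rule: if $v \in C^*$, charge $v$ to itself; otherwise $v$ is internal with $v \notin C^*$, so $v$ has at least one child $c$ in $T$, and since the tree edge $vc$ must be covered by $C^*$ while $v \notin C^*$, we necessarily have $c \in C^*$, and I charge $v$ to any such $c$. A vertex $u \in C^*$ then receives at most two charges: at most one from itself (via the first rule) and at most one from its unique parent in $T$ (via the second rule). Summing over $C^*$ gives $|I| \le 2|C^*|$, which is the desired bound.

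The whole argument hinges on the DFS property that leaves form an independent set; once this is in hand, both the vertex-cover property and the factor-$2$ bound fall out quickly, with the charging scheme being the only non-routine ingredient. Minor degenerate cases (for instance, a trivial graph in which the root is simultaneously a leaf) are dispatched by noting that then $I = \emptyset$ and $C^* = \emptyset$, so the inequality holds vacuously.
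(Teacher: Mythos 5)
Your proof is correct. Parts (i) and (ii) coincide with the paper's argument: both rest on the DFS property that all non-tree edges are back edges, so the leaves form an independent set and the internal nodes form a subtree of $T$. Where you genuinely diverge is in the approximation bound. The paper exhibits an explicit matching of size at least $|X|/2$: it compares the number of internal nodes at odd levels with the number at even levels, takes the larger class, and matches each of its members to an arbitrary child (the parity of levels guarantees these edges are pairwise disjoint); the trivial matching lower bound on vertex covers then yields the factor $2$. You instead run a local charging scheme against an optimal solution $C^*$: each internal node pays either itself (if it lies in $C^*$) or the child witnessing coverage of a tree edge, and each vertex of $C^*$ absorbs at most two charges (one from itself, one from its unique parent). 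Both arguments are sound, and both in fact bound $|I|$ by twice the size of a minimum \emph{not necessarily connected} vertex cover, which the paper also notes. The paper's route has the small bonus of producing a matching certificate (a dual object independent of any optimal solution), while your charging argument dispenses with the odd/even-level case distinction and is arguably the more direct of the two. No gaps.
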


\begin{proof}
Let $X$ be the set of internal nodes of $T$.
Clearly $X$ is a connected vertex cover of $G$,
since there are not cross edges in any DFS tree.
We prove that there is matching of size at least $|X|/2$ in $G$, proving that there
is no connected vertex cover (even no vertex cover) of size smaller than $|X|/2$.

If the number of internal nodes at odd levels is at least
the number of internal nodes at even levels in $T$,
then we match each internal node on an odd level with its arbitrary child.
Otherwise we match each internal node on an even level with its arbitrary child.
In this way we show a matching of size at least $|X|/2$ and the lemma follows.
\end{proof}

\end{document}